\documentclass{aicom2e}

\usepackage{amssymb,amsthm}\usepackage{url}\usepackage{graphicx}
\usepackage{multirow}
\setcounter{tocdepth}{3}

\begin{document}
\begin{frontmatter} 

\title{Efficient Lineage for SUM Aggregate Queries}

\author[]{\fnms{Foto} \snm{N. Afrati}%
}
\address{School of Electrical and Computer Engineering\\
National Technical University of Athens\\
15780, Athens, Greece}

\author[]{\fnms{Dimitris} \snm{Fotakis}%
}
\address{School of Electrical and Computer Engineering\\
National Technical University of Athens\\
15780, Athens, Greece}

\author[]{\fnms{Angelos} \snm{Vasilakopoulos}%
}
\address{School of Electrical and Computer Engineering\\
National Technical University of Athens\\
15780, Athens, Greece\\
E-mail: avasilako@gmail.com}

\newcommand{\squishlist}{
\begin{list}{$\bullet$}
{ \setlength{\itemsep}{0pt} \setlength{\parsep}{3pt}
\setlength{\topsep}{3pt} \setlength{\partopsep}{0pt}
\setlength{\leftmargin}{1.5em} \setlength{\labelwidth}{1em}
\setlength{\labelsep}{0.5em} } }

\newcommand{\squishend}{
\end{list}  }

\theoremstyle{remark}
\newtheorem{example}{Example}
\newtheorem{remark}{Remark}
\theoremstyle{plain}
\newtheorem{theorem}{Theorem}
\newtheorem{proposition}{Proposition}
\newtheorem{lemma}{Lemma}
\newtheorem{corollary}{Corollary}
\theoremstyle{definition}
\newtheorem{definition}{Definition}

\begin{abstract}
AI systems typically make decisions and find patterns in data based on the computation of aggregate and specifically   sum functions, expressed as queries, on data's attributes. This computation can become costly or even inefficient when these queries concern the whole or big parts of the data and especially when we are dealing with big data. New types of intelligent analytics require also the explanation of why something happened.
 
In this paper we present a randomised algorithm that constructs a small summary of the data, called Aggregate Lineage, which can approximate well and explain all sums with large values in time that depends only on its size. The size of Aggregate Lineage is practically independent on the size of the original data. Our algorithm does not assume any   knowledge on the set of sum queries to be approximated.

\end{abstract}

\begin{keyword}
Artificial Intelligence
\sep Databases
\sep Aggregate Queries
\sep Database Lineage
\sep Query Approximation
\sep Randomised Algorithms
\end{keyword}

\end{frontmatter}

\section{Introduction}\label{sec:intro}

  Big data poses new challenges not only in storage but  in intelligent data analytics as well.   Many organisations have the infrastructure to maintain big structured data and need to find methods to efficiently discover patterns and relationships to derive intelligence~\cite{oracle2,oracle-bigdata}. Thus, it would be desirable to be able to construct out of big data a right representative part that can explain aggregate queries, e.g., why the salaries or the sales of a department are high. 
  
   AI systems typically make decisions  based   
on the value of a function computed on data's attributes.
 Several approaches  have in common the computation of aggregates over the whole or large subsets of the data that helps explain patterns and trends of the data. 
E.g., recommendation systems rank and retrieve items that are more interesting for a specific user by aggregating  existing recommendations~\cite{rec2011}.  
  For another example, collaborative filtering computes  a function which uses aggregates and a sum over  the existing ratings from all  users for each product  in order to predict  the preference of a new user~\cite{collaborate,ers-2008}. User preferences are often described as  queries~\cite{journal-aicom09}, e.g.,  queries that give constraints on item features that need to be satisfied.

Another reason for which
data analytics seek to explain data is for 
data debugging purposes.  {\em Data debugging}, which is the
the process that allows users to find incorrect data~\cite{MeliouGNS11,MusluBM13}, is a research direction that is growing fast. 
Data are  collected by various techniques which, moreover, are unknown to and uncontrolled by the user, thus are often erroneous. Finding which part of the data contains errors is essential for companies and affects a large part of their business.

All these applications call for techniques to explain our data. Aggregation is a significant component in all of them.  In this paper we offer a technique that constructs  a summary of the data  with properties that allow it to be used efficiently to explain much of the data behaviour in aggregate for sums. We refer to this summary as {\em Aggregate  Lineage}, since in most applications it represents the source of an aggregate query\footnote{Lineage used to be referred to as ``explain'' in database papers of the late 80's.}.

Lineage (a.k.a. provenance) keeps track of where data comes from. Lineage has been investigated for  data debugging purposes~\cite{IkedaCFSTW12}. Storing the complete lineage of data can be prohibitively expensive and storage-saving techniques to eliminate or simplify similar  patterns in it are studied in~\cite{ChapmanJR08}. 
For select-project-join SQL queries, 
lineage stores the set of {\em all} tuples that were used to compute a tuple in the answer of the query~\cite{BenjellounSHTW08}. This is natural for select-project-join SQL queries where original attribute values are ``copied'' in attribute values of the answer.
However, in an aggregate query the value of the answer is the result of applying an aggregate function over many numerical attribute values.  When we want to understand why we get an aggregate answer it may no longer be important or feasible to have lineage to point to all contributing original tuples and their values. We would rather want to compute {\em few}  values that can be used to tell us as much as possible  about the origin of the result of an aggregate query. However is this at all possible and if it is what are the limitations?

In this paper we initiate an investigation of such questions and, interestingly, we show that useful and practical solutions exist. In particular, we offer a technique that 
uses randomisation  to compute Aggregate Lineage which is  a small representative sample (it is  more sophisticated than just a simple random  sample) of the data.  This sample has the  property to allow for good approximations of a sum query on ad hoc subsets of data -- we call them {\em test queries}. 
 Test queries are applied to the Aggregate Lineage -- not the whole original data. The test queries which we consider are sum queries with same aggregated attribute conditioned with any grouping attributes depending on which subsets of the data we want to test. 
We give performance guarantees about the quality of the results of the test queries that show the approximation to be good for test queries with large values (i.e., close to the total sum over the whole set of data). 
Our performance guarantees hold, with high probability, for any set of queries, even if the number of queries is exponentially large in the size of the  lineage. The only restriction is that the queries should be oblivious to the actual Aggregate Lineage. This restriction is standard in all previous work on random representative subsets for the evaluation of aggregate queries and is naturally satisfied in virtually all practical applications.
The following example offers a scenario about how Aggregate Lineage can be used in data debugging and demonstrates how some test queries can be defined.

\begin{example}
\label{ex:first}
Suppose that the accounting department of a big company  maintains a database  with a relation $Salaries$ with hundreds of attributes and  millions of tuples. Each tuple in the relation may contain an identifier of an employee stored in attribute $EmplID$, his Department stored in attribute $Department$,  his annual salary  stored in attribute $Sal$ and many more attribute values. Other relations are extracted from this relation, e.g.,
a relation which contains aggregated data such as the total sum of salaries of all employees.
A user is trying to use the second relation for decision making but he finds that the total sum of salaries
is unacceptably high. He does not have easy access to the original relation or he does not want to waste time
to pose time-consuming queries on the original big relation. The error could be caused by several reasons
(duplication of data in a certain time period, incorrect code that computes salaries in a new department).
Thus e.g., if we could find the total sum of salaries for employees in the toy department during 2009,
and see that this is  unreasonably high, still close to the first total sum of all employees' salaries, then we will be able to detect such errors and narrow them
down to small (and controllable) pieces of data.

In order to do that, we need the capability of
posing sum queries restricted to certain parts of the data by using combinations of attributes.
This will help the user understand which piece of data is incorrect. We do not know in advance, however,
which piece of data the user would want to inquire and thus Aggregate Lineage should allow the user
to be able to get good approximated answers to whatever queries he wants to try. There are
billions of such possible queries and hence billions of subsets of data which we want to compute
a good approximation of the summation of salaries. We want Aggregate Lineage to offer this possibility.
\end{example}

We propose to keep as Aggregate Lineage a small relation under the same schema of the original relation.
In order to select which tuples to include,
we use valued-based sampling with repetition, i.e., weighted random sampling where the probability of selecting each tuple is proportional to its value on the summed attribute.
The intuition why this method works is the following.
Larger values contribute more to the sum than smaller ones,
thus we expect that tuples with larger values should be selected more often than tuples with smaller values.
Hence, we could end up with a tuple selected many times in the sample even if it appears only once in the original data.
On the other hand, if there are many tuples with values of moderate size, many of them will be selected in the Aggregate Lineage, so that their total contribution to the approximation of the sum remains
significant.

\subsection{Our contribution}

In our approach Aggregate Lineage is  a small relation with same schema
as the original relation and with the property to offer good approximations to test queries posed on it.

 To present performance guarantees, we build on Alth\"{o}fer's Sparsification Lemma \cite{sparse94}. In \cite{sparse94}, Alth\"{o}fer shows that the result of weighted random sampling over a probability vector is a sparse approximation of the original vector with high probability. This technique has found numerous applications e.g., in the efficient approximation of Nash equilibria for (bi)matrix games \cite{LMM03}, in the very fast computation of approximate solutions in Linear Programming \cite{LY94}, and in selfish network design \cite{FKS12}.

 In this paper, we show for the first time that the techniques of \cite{sparse94} are also useful in the context of sum database queries with lineage.
Our results show that the Aggregate Lineage that we extract
 has the following properties (which we describe in technical terms and prove rigorously in Section~\ref{sec:tec}):
\begin{itemize}
\item Its size is practically independent of the size of the original data.

\item It can be used to approximate well all ``large'' sums  (i.e., with  values close to the total sum),  of the aggregated attribute in time that depends only on  its size, and thus is almost independent of the size of the original data.
\end{itemize}

\section{Computing Aggregate Lineage} \label{sec:com}

\begin{figure*}
\begin{center}
\begin{tabular}{| c |l|}
\hline
$t[A]$ & value of attribute $A$ in tuple $ t$\\
\hline
$S$ & sum of all values over attribute $A$\\
\hline
$p_t$ & probability that Algorithm Comp-Lineage selects tuple $t$\\
\hline
$Fr$ & additional attribute recording the frequency of a tuple in the lineage\\
\hline
$L_{R.A}$ & the lineage relation computed by Algorithm Comp-Lineage wrto attribute $A$\\
\hline
$Q(R.A)$ (Sec. 4) & a sub-sum query computed over original relation $R$ wrto attribute $A$\\
\hline
$Q'(L_{R.A})$ (Sec. 4)& sub-sum query $Q$ computed over the aggregate lineage relation  \\
\hline
$I^Q_R$ (Sec. 4)& set of identifiers of the tuples in relation $R$ that satisfy the predicates in query $Q$\\
\hline
\end{tabular}
\caption{Main symbols used in the paper.}
\label{symbols-table}
\end{center}
\end{figure*}

In this section, we present randomised algorithm Comp-Lineage which computes Aggregate Lineage in one pass over the data and in time linear in the size $n$ of the original  database relation. In Section~\ref{sec:tec} we show  that the output of Comp-Lineage is useful to approximate {\em arbitrary} ad-hoc sum test queries in time independent of $n$. We note that our algorithm is agnostic of the specific sum queries that will be approximated by using its output. 

Suppose that we are given a database with a relation   $R$ with $n$  tuples and 
we are given
a positive integer $b$ which is the number of tuples we have decided to include in the Aggregate Lineage
(in Section~\ref{sec:tec} we will explain how we decide $b$ to give good performance and approximation guarantees).
 Suppose that $A$ is a numerical attribute of $R$ which takes nonnegative values.
  Let $S$ be the sum of values of attribute $A$  over
     all $n$ tuples.
The algorithm essentially is a biased sampling with repetition that selects $b$ tuples from $R$.
Each tuple $t$ has probability to be selected equal to $p_t=t[A]/S$ where $t[A]$ is the value of attribute
$A$ in $t$. It collects initially a bag (a.k.a. multiset and is allowed to have the same element more than once) of tuples (since each tuple may be selected multiple times)
which is turned in a set of tuples by adding an extra attribute $Fr$ (for Frequency) which shows the number
of times this tuple is selected. We denote by $L_{R.A}$
the Aggregate Lineage of relation $R$ with sum attribute
$A$.

\smallskip

{\sc Algorithm Comp-Lineage}

{\small

{\bf Input:} A  relation $R$ with  $n$ tuples  and positive integer $b<n$.

{\bf Output:} An $\mathrm{Aggregate~Lineage}$  relation $L_{R.A}$ with at most $b$ tuples.
\begin{itemize}
\item
Randomly select with repetition one out of the $n$ tuples of $R$ in $b$ trials where each tuple $t$ is selected
with probability $p_t$.
\item Form relation  $L_{R.A}$ by including all tuples selected above and adding an extra attribute $Fr$ to
each tuple to record how many times this tuple was selected.
\end{itemize}

 We can use the techniques of \cite{EfraimidisS06} for weighted random sampling and efficiently implement our algorithm to run in linear time in the size of the input either in a parallel/distributed environment or over data streams.

Table~\ref{symbols-table} summarizes the main symbols used throughout the paper.

 \section{Running Example}

\begin{example}\label{ex:al}

We illustrate  Algorithm Comp-Lineage by applying it to Example~\ref{ex:first}  with $b=8,852$ and presenting the data and the Aggregate Lineage in Figure~\ref{fig:ex0-sparse}. Actually Figure~\ref{fig:ex0-sparse} only shows the value of the aggregated attribute ($Sal$ in our example), the rest of the tuple is not
shown.

\begin{figure*}
\begin{center}
\begin{tabular}{| c |c || c|c|c|c|}
\hline
  $Sal$:  & \# of Tuples & Total \# of Tuples  &  $Fr$ & \# of Tuples with & $Sal$: Values $Fr\cdotp S/b$      \\
 O.V. & in $Salaries$ & in Aggregate Lineage &  &  $Fr$  &  in Aggregate Lineage \\
\hline  \hline
\multirow{9}{*}{$10^{9}$} & \multirow{9}{*}{$100$}& \multirow{9}{*}{$100$} & $3$   & $5$ & $3\cdotp S/b= 4.41\times 10^{8}$   \\
\cline{4-6} & &   & $4$  & $10$ & $4\cdotp S/b=5.87\times10^{8}$\\
\cline{4-6} & & & $5$   & $19$ & $5\cdotp S/b=7.34\times10^{8}$\\
\cline{4-6} & & & $6$   & $14$ & $6\cdotp S/b=8.81\times10^{8}$ \\
\cline{4-6} & & & $7$   & $13$ & $7\cdotp S/b=1.03\times10^{9}$\\
\cline{4-6} & & & $8$  & $15$ &  $8\cdotp S/b=1.17\times10^{9}$ \\
\cline{4-6} & & & $9$  & $8$ & $9\cdotp S/b=1.32\times10^{9}$\\
\cline{4-6} & & & $10$  & $12$ & $10\cdotp S/b=1.47\times10^{9}$\\
\cline{4-6} & & & $11$   & $4$ & $11\cdotp S/b=1.62\times10^{9}$ \\
\hline
\multirow{4}{*}{$10^{8}$} & \multirow{4}{*}{$1,000$} & \multirow{4}{*}{$497$} &  $1$    & $347$  &  $ S/b=1.47\times 10^{8}$ \\
\cline{4-6} & & & $2$   & $123$ & $2\cdotp S/b=2.94\times10^{8}$\\
\cline{4-6} & & & $3$   & $20$ & $3\cdotp S/b=4.41\times10^{8}$ \\
\cline{4-6} & & & $4$   & $7$ & $4\cdotp S/b=5.87\times10^{8}$ \\
\hline
$10^{7}$ & $10,000$ & $681$ & $1$ & $681$ &$ S/b=1.47\times 10^{8}$ \\
\hline
$10^{6}$ & $1,000,000$& $6,809$ & $1$ & $6,809$ & $ S/b=1.47\times 10^{8}$  \\
\hline
$10$ & $1,000$& $0$& $0$ & $0$ & $0$  \\
\hline

\hline
\end{tabular}
\caption{Properties of $\mathrm{Aggregate~Lineage}$ $L_{Salaries.Sal}$ for $b=8,852$.
The first two columns describe the data. The next three columns describe the Aggregate Lineage relation. The last column shows how we use this lineage to compute sub-sums.}
\label{fig:ex0-sparse}
\end{center}
\end{figure*}

The first two columns of Figure~\ref{fig:ex0-sparse} present the data in relation $Salaries$. In order to be able to present many tuples
we have chosen a relation with a few values for attribute $Sal$, actually five (i.e., $10^9, 10^8, 10^7,10^6$ and 10) and their Original Values (O.V.) 
are shown in the first column. The second column shows how many tuples in $Salaries$ have these values
in $Sal$. Thus, it says, e.g., that there are 100 tuples with value in $Sal$ equal to $10^9$, 1,000 tuples
with value in $Sal$ equal to $10^8$ and so on.

The third  column in Figure~\ref{fig:ex0-sparse} shows how many tuples from $Salaries$ with a specific
value in $Sal$ are selected by {\sc Algorithm Comp-Lineage} to be included in the Aggregate Lineage relation.
Thus, e.g., all 100 tuples with $Sal=10^9$ were chosen, only 681 tuples with $Sal=10^7$ were chosen and no tuple with $Sal=10$ was chosen.

In order to represent the Aggregate Lineage relation  $L_{Salaries.Sal}$  in the most demonstrative way, we have chosen to
partition its tuples in blocks  (each block further divided in multiple rows in columns 4, 5 and 6), each block corresponding to one value of $Sal $ in $Salaries$. Thus the first block has 9 rows, the second block has  4 rows and the last three blocks have one row each.
This breaking into blocks gives a visualisation of the characteristics of the algorithm.

The fourth    column stores the extra attribute frequency $Fr$  which tells how many times a certain tuple was
selected by the algorithm and the fifth column stores the number of tuples that were selected so many times.
Thus, e.g., the first row says that 5 tuples were selected 3 times each. The ninth row says that 4 tuples from $Salaries$  were selected 11 times each.

The blocks give us an intuition of the characteristics of the Aggregate Lineage.
 The first block
corresponds to the largest value of $Sal$ and tuples with this value (i.e., $Sal= 10^9$) contributed quite heavily to the lineage - all 100 tuples with $Sal = 10^9$ were selected multiple times.
In more detail, there are 100 tuples with value $Sal= 10^9$. 
Of those tuples, 5 were added in the bag 3 times each, 10 tuples were added in the bag 4 times each, and so on. Thus, by considering these 100 tuples, the Algorithm Comp-Lineage added in the bag  $3\cdotp 5+4\cdotp 10+5\cdotp 19+6\cdotp 14+7\cdotp 13+8\cdotp 15+9\cdotp 8+10\cdotp 12+11\cdotp 4=\,681$ tuples in total.  That is to say, each of those 100 tuples contributed on average $6.81$ to the
bag. When we get a set out of the bag by using frequencies (to avoid repeating a tuple multiple times), then we see that the average frequency per tuple is $6.81$. 
So, from this first block, the $681$ tuples in the bag of Algorithm Comp-Lineage are transformed to a set of $100$ tuples in Aggregate Lineage with average frequency $6.81$.
We can compare it with the average frequency in the second 
block which is 0.681 
(this is $1\cdotp 347 + 2\cdotp  123 +3\cdotp  20 +4\cdotp 7=681$ divided by 1000 tuples) and see that, in the data of our example, each tuple of the first block contributes more  heavily  to the lineage.

As we will explain in more detail later, this shows partly why the lineage is useful for discovering almost accurately sub-sums that
are large compared to the total sum, whereas when a sub-sum is small in comparison, then the lineage
cannot be used to compute it accurately.

The second block did not contribute that heavily but still quite a lot, around half of tuples with
$Sal=10^8$ were selected at least once and quite a few more than once, in total this block contributed 681 tuples in the bag. The third block contributed
moderately. The fourth block is interesting because the value of $Sal$ is very small only $10^6$ but
it contributed quite a lot due to the fact that there are many tuples in $Salaries$ with $Sal=10^6$, thus it contributed
almost 85 percent of the tuples in the Aggregate Lineage.

Finally the last column in the figure shows how much each tuple from the Aggregate Lineage contributes
to the approximation of sub-sums that are computed by the test queries.
The same tuple is added in the Aggregate Lineage several times as recorded in the new attribute $Fr$ and thus,
in order to calculate the contribution of a certain tuple, we multiply  its frequency in $Fr$
by $S/b$. By doing so, some tuples (e.g., the ones in the fourth  block) in our
example of Figure~\ref{fig:ex0-sparse} will contribute much more than their actual value in $Sal$. But this is
to compensate for the tuples with value close to it (same value in our example) that are not
selected to be included in the Aggregate Lineage. In the next section we give the technical details on how Aggregate Lineage can be used in order to approximate sub-sums.

 \end{example}

Note that $\mathrm{Aggregate~Lineage}$  does not assume any knowledge of the query set: 
i.e., we run the random selection of Algorithm Comp-Lineage only once and compute $\mathrm L_{R.A}$ without assuming anything about the queries. Then, this same relation $\mathrm L_{R.A}$ can  be used to make us understand any sub-sum test query, without requiring that the test queries are given beforehand or requiring that the test queries are chosen in any specific fashion (e.g., they do not have to be chosen uniformly at random), as long as the query choice is oblivious to the actual sample computed by $\mathrm{Aggregate~Lineage}$%
\footnote{In technical terms, the queries are posed by an \emph{oblivious adversary}, i.e., an adversary that knows how exactly $\mathrm{Aggregate~Lineage}$ works but does not have access to its random choices. The restriction to oblivious adversaries is standard and unavoidable, since if one knows the actual value of $\mathrm L_{R.A}$, he can construct a query that includes only tuples not belonging to $\mathrm L_{R.A}$, for which no meaningful approximation guarantee would be possible.}.
We first present the theoretical approximation guarantees and then demonstrate how these  guarantees play for debugging  on our running example.

\section{Approximation Guarantees of Test Queries on Aggregate Lineage}
\label{sec:tec}
In this section we prove the theoretical guarantees of $\mathrm{Aggregate~Lineage}$.
 Let $R$ be a relation   with a   nonnegative numerical attribute $A$. We consider SUM queries  that ask for the sum of  attribute's $A$ values over arbitrary subsets of the tuples in  relation $R$.
 We use tuple identifiers in order to succinctly represent subsets of tuples.  Thus, any SUM query   defines a set of tuple identifiers for tuples that satisfy its predicates, hence the following formal definitions:

\begin{definition}[Exact SUM $Q(R.A)$]
\label{def:exact}
Let  $R$ be a database relation.  We attach a   tuple identifier on each tuple of $R$.  We denote by $I_R$ the set of all identifiers in relation $R$.  Given an attribute $A$ in the schema of $R$, we denote by $a_i$ the value of attribute $R.A$ in the tuple with identifier $i$ in $R$.

Let $Q$ be a SUM query over $R.A$. We denote by  $I^Q_R$ the set of  tuple identifiers from $I_R$  for tuples of $R$ that satisfy $Q$'s predicates.

The result of a SUM query, $Q(R.A)$, is the summation
of the values of $R.A$ over the set of tuples with identifiers that appear  in $I^Q_R$, i.e.,
$Q(R.A)=\sum_{i\in I^Q_R} a_i$.

\end{definition}
\begin{definition}[Approximated SUM $Q'(L_{R.A})$]
\label{def:appr}

Let $Q$ be a SUM query  over $R.A$ and let $L_{R.A}$ be  an $Aggregate~~Lineage$.  We attach a   tuple identifier on each tuple of $L_{R.A}$.   We denote by $I_{L}$ the set of all identifiers in $L_{R.A}$. We denote by $I^Q_L$ the set of tuple identifiers from  $I_{L}$ for tuples of $L_{R.A}$ that satisfy $Q$'s predicates (since the set of attributes of $R$ is a subset of the set of attributes of $L_{R.A}$, we have that the predicates of a SUM query  $Q$, expressed on  attributes of $R$, define  $I^Q_{L}$).

 We denote by $f_i$ the value of attribute $L_{R.A}.Fr$ in the tuple with identifier $i$ in $L_{R.A}$.

The approximated result of SUM query $Q$, denoted by  $Q'(L_{R.A})$, is the summation
of the values of $L_{R.A}.Fr $ over the set of tuples with identifiers that appear  in $I^Q_L$ multiplied by $S/b$, i.e.,
$Q'(L_{R.A})=\sum_{i\in I^Q_L} f_i\cdotp S/b$.
\end{definition}

The following theorem provides the performance guarantees for any arbitrary set of $m$ SUM queries computed over the  Aggregate Lineage relation in order to serve as an approximation of the corresponding SUM queries
over the original data.

\def\Exp{\mathrm{I\!E}}
\def\Prob{\mathrm{I\!Pr}}

\begin{theorem}\label{th:sparse}
Let $R$ be a relation with $n$ tuples having nonnegative values $a_1, \ldots, a_n$ on attribute $A$, and let $S = \sum_{i=1}^n a_i$. Then, for any collection of $m$ SUM queries $Q_1(R.A), \ldots, Q_m(R.A)$ (not known to the algorithm), any $p \in (0, 1)$, and any $\epsilon > 0$, the Algorithm Comp-Lineage with input all tuples of $R$ and $b = \lceil \ln{(2m/p)}/(2\epsilon^{2}) \rceil$ derives an $\mathrm{Aggregate~Lineage}$ $L_{R.A}$ such that
\( | Q_j(R.A)-Q'_j(L_{R.A})| \leq  \epsilon S \),
for all $j \in [m]$, with probability at least $1-p$.
\end{theorem}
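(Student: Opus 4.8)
The plan is to reduce the whole statement to a standard concentration-of-measure argument, by observing that for each fixed query the value-based sampling of Algorithm Comp-Lineage turns the estimator $Q'_j(L_{R.A})$ into a scaled sum of $b$ independent Bernoulli trials whose success probability is exactly $Q_j(R.A)/S$. This is the content behind Alth\"{o}fer's Sparsification Lemma applied to the probability vector $(p_1,\ldots,p_n)=(a_1/S,\ldots,a_n/S)$.

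First I would fix a single query $Q_j$ and let $T = I^{Q_j}_R$ denote the set of tuple identifiers satisfying its predicates. For each trial $k$ of the sampling step I introduce the indicator $Y_k$ that equals $1$ if the tuple drawn in trial $k$ belongs to $T$ and $0$ otherwise. Since trial $k$ selects tuple $i$ with probability $p_i = a_i/S$, the $Y_k$ are independent and identically distributed with
\[
\Prob[Y_k = 1] = \sum_{i \in T} \frac{a_i}{S} = \frac{Q_j(R.A)}{S} =: \mu_j .
\]
The key bookkeeping identity is $\sum_{i \in I^{Q_j}_L} f_i = \sum_{k=1}^b Y_k$, because the frequency attribute $f_i$ counts exactly how many of the $b$ trials landed on tuple $i$, so summing over $T$ counts all trials that landed inside the query. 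Hence $Q'_j(L_{R.A}) = (S/b)\sum_{k=1}^b Y_k$, whose expectation is $\Exp[Q'_j(L_{R.A})] = (S/b)\, b\, \mu_j = Q_j(R.A)$, so the estimator is unbiased.

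Next I would write the deviation in normalized form: with $\widehat\mu_j = \tfrac1b\sum_{k=1}^b Y_k$ we have $|Q_j(R.A) - Q'_j(L_{R.A})| = S\,|\mu_j - \widehat\mu_j|$, so the target bound is equivalent to $|\widehat\mu_j - \mu_j| \le \epsilon$. Because the $Y_k$ are independent and take values in $[0,1]$, Hoeffding's inequality yields
\[
\Prob\!\left[\,|\widehat\mu_j - \mu_j| \ge \epsilon\,\right] \le 2\exp(-2b\epsilon^2).
\]
Finally I would take a union bound over the $m$ queries, bounding the probability that some query fails by $2m\exp(-2b\epsilon^2)$, and check that the prescribed $b = \lceil \ln(2m/p)/(2\epsilon^2)\rceil$ forces $2m\exp(-2b\epsilon^2) \le p$, which finishes the proof.

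The Hoeffding-plus-union-bound calculation is routine; the one step carrying the real content is the reduction itself, namely recognizing that sampling with probability proportional to $a_i$ makes the per-trial indicator of ``landing inside $Q_j$'' a Bernoulli variable with mean precisely $Q_j(R.A)/S$. This is what gives unbiasedness for free and, crucially, lets the argument hold simultaneously for every query no matter how the sets $T$ overlap, since the union bound requires no independence across queries. The point to be careful about is the obliviousness hypothesis: the analysis treats each $T$ as fixed before the random draws, so the $Y_k$ are genuinely independent of the query choice. If an adversary could choose $Q_j$ after inspecting $L_{R.A}$, the indicators would no longer be the fixed-probability independent Bernoulli trials on which this argument relies.
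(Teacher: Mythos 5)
Your proposal is correct and follows essentially the same route as the paper's own proof: both define per-trial Bernoulli indicators for ``the $k$-th draw lands in $I^{Q_j}_R$,'' identify $\sum_{i \in I^{Q_j}_L} f_i$ with the number of such successes so that $Q'_j(L_{R.A})/S$ is their empirical mean with expectation $Q_j(R.A)/S$, apply the Chernoff--Hoeffding bound per query, and finish with a union bound over the $m$ queries using the prescribed $b$. Your closing remark on obliviousness also matches the caveat the paper makes in its footnotes, so there is nothing to add.
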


\begin{proof}
The proof is an adaptation of the proof of Alth\"{o}fer's Sparsification Lemma \cite{sparse94}. For simplicity, we assume, without loss of generality, that the set $I_{R}$ of all tuple identifiers of  $R$ in Definition \ref{def:exact}
  is $I_{R}=\{1,\ldots,n\}$. We define $b$ independent identically distributed random
variables $X_1, \ldots, X_b$, which take each value $i \in [n]$ with probability $a_i / S$. Namely, each random variable $X_i$ corresponds to the outcome of the $i$-th trial of Comp-Lineage. For each tuple $i$, its frequency in the sample is 
$f_i = |\{ k \in [b] : X_k = i\}|$.

Let us fix an arbitrary SUM query $Q_j(R.A)$. For each $k \in [b]$, we let $Y^k_j$ be a random variable that is equal to $1$, if $X_k \in I^{Q_j}_{R}$, and $0$, otherwise. Since the random variable $Y^k_j$ is equal to $1$ with probability $Q_j(R.A)/S$, $\Exp[Y^k_j] = Q_j(R.A)/S$. We observe that the random variables $\{ Y^k_j \}_{k \in [b]}$ are independent, because the random variables $\{ X_k \}_{k \in [b]}$ are independent.
Furthermore, we let $Y_j$ be a random variable defined as
\[ Y_j = \frac{1}{b} \sum_{k = 1}^b Y^k_j \]
By definition, $Y_j = \sum_{i \in I^{Q_j}_{R}} f_i/b = \sum_{i \in I^{Q_j}_{L}} f_i/b$, and thus we have that $Y_j = Q'_j(L_{R.A})/S$, i.e., $Y_j$ is equal to the approximated result of the SUM query divided by $S$. Also, by linearity of expectation, $\Exp[Y_j] = Q_j(R.A)/S$.

Applying the Chernoff-Hoeffding bound, we obtain that for the particular choice of $b$, with probability at least $1-p/m$, the actual value of $Y_j$ differs from its expectation $Q_j(R.A)/S$ by at most $\epsilon$, which implies that $Q'_j(L_{R.A})$ differs from $Q_j(R.A)$ by at most $\epsilon S$. Formally, by the Chernoff-Hoeffding bound
\footnote{We use the following form of the Chernoff-Hoeffding bound
(see \cite{Hoef63}): Let $Y^1, \ldots, Y^b$ be random variables
independently distributed in $[0, 1]$, and let $Y = \frac{1}{b}
\sum_{k=1}^b Y^k$. Then, for all $\epsilon > 0$, $\Prob[|Y - \Exp[Y]| >
\epsilon] \leq 2\exp^{-2\epsilon^2 b}$, where $\exp = 2.71\ldots$ is the
basis of natural logarithms.},
$$
 \Prob[|Q'_j(L_{R.A}) - Q_j(R.A)| > \epsilon S] $$
 $$= \Prob[|Y_j - Q_j(R.A)/S| > \epsilon]  $$
 $$\leq 2\exp^{-2\epsilon^2 b} \leq p/m\,,
$$
where the last inequality follows from the choice of $b$.

Applying the union bound, we obtain that
\[
 \Prob[\exists j \in [m] : |Q'_j(L_{R.A}) - Q_j(R.A)| > \epsilon S] \leq p
\]
which concludes the proof of the lemma.
\end{proof}

\begin{example}
\label{ex:practical}
Suppose, in our running example,
we want to be able to answer with good approximation $m=10^{6}$ queries.
What are the guarantees that the theorem provides?
The original data have  $n\approx10^6$ tuples.
Suppose we select the number of tuples in the Aggregate  Lineage  to
be $b\approx 9000$. Then the theorem says that, by setting $\epsilon=0.04$, we can
compute any of $10^{6}$ arbitrary queries within $0.04 S$ of its real value
with probability $1-10^{-6}$. Thus, if the real exact  value of the query $Q_{1}$
is equal to $Q_{1}(Salaries.Sal)=0.4S=S_{1}$ (remember $S$ is the sum over all tuples of relation $R$)
then the approximation will be $0.04S= 0.04S_{1}/0.4=0.1S_{1}$. 
If for another query $Q_{2}$ we have $Q_{2}(Salaries.Sal)=0.8S=S_{2}$ then the approximation will be $0.05S_{2}$, so, then with high probability
we get an answer that is within a factor of $0.05$ of the actual answer.
\end{example}

{\sl Observations on the practical consequences of Theorem~\ref{th:sparse}}.
Examining closely equation $b=\lceil \ln{(2m/p)}/2\epsilon^{2} \rceil$ which gives us an upper bound of the 
number of tuples in the Aggregate Lineage for $m$ queries and with $p$ and $\epsilon$
guarantees as in its statement, we make the following observations:
\begin{itemize}
\item The value of $b$ depends on $m$ as the logarithm, hence if we go from $m$ to $m^2$ queries,
we only need to multiply $b$ by 2 in order to keep the same performance guarantees. Thus it is reasonable to state that, in many practical cases
the number $m$ of queries that can be approximated well can be as large as
a polynomial on the size of data -- even with coefficient in the order of a few hundreds.
\item The value of $b$ does not depend much on $p$ (again only as in the logarithm) but
it depends mainly on $\epsilon$ which controls the approximation ratio
(the approximation ratio itself is $\epsilon /\rho$ if the query to be computed has a sum
$S'=\rho S$).
\end{itemize}

\section{A debugging scenario }\label{sec:deb}
Here is what a user can do for data debugging when using the Aggregate Lineage we propose.

\begin{itemize}
\item  He computes sub-sums by
 filtering some attributes and possibly specific values for these attributes.
E.g., what was the sum of salaries of employees in the toy department in Spring 2010 and
only for those employees who were hired after 2005. The user devises several such
test queries as he sees appropriate and while he computes them and checks that sub-data
is ok or suspicious, he devised different test queries to suit the situation. E.g., if he observes an
unusually large value, close to the total sum, in the query about employees in the toy department and hired before 2005, then
the rest of the queries he devises stay within this department and within the range until 2005, and tries to narrow down
further the wrong part of data. E.g., now he narrows down to each month or/and to employees that are hired between 2005 and 2007, etc. On the other hand, if he finds the answer satisfactory,
then he announces this part of the data correct, therefore stays outside this sub-data and tries to find some other part of the data that are faulty.
The user uses and poses his test queries over the stored small Aggregate Lineage instead of inefficiently use the  original big relation. 
\end{itemize}
In the following example we show how using Aggregate Lineage to approximate test queries applies to our running example.

\begin{example}\label{ex:deb}
We continue our running Example~\ref{ex:al} where we computed $\mathrm{Aggregate~Lineage}$ $L_{Salaries.Sal}$.
Suppose that we have a SUM test query $Q_{1}$ asking the sum of the salaries of a subset of the employees of the company defined from a subset of $EmpID$'s.
Let this subset  consist    of  $50$  employees with salary $10^{9}$, $5,000$ employees with salary $10^{7}$ (so half of them) and of all $10^{6}$ employees with salary $10^{6}$. We compute the query over $Salaries$ and take the  exact answer $1.1\times 10^{12}$.

 In order to use  $\mathrm{Aggregate~Lineage}$ to understand our data we compute $I^{Q_{1}}_{L}$. The $\mathrm{Aggregate~Lineage}$ has  at most $8,852$ tuples.
The identifiers of $I^{Q_{1}}_{L}$ define  the {\em sub-lineage} of query $Q_{1}$ over $L_{Salaries.Sal}$. The sub-lineage of $Q_{1}$ points to $50$ of the tuples of  $L_{Salaries.Sal}$
with original salaries $10^{9}$ and to all $6,809$ tuples   with original $Sal$ values $10^{6}$
 (cf.  Figure \ref{fig:ex0-sparse}).
It will also point to  some tuples of $L_{Salaries.Sal}$ with $Sal$ values  $10^{7}$:
On average query $Q_{1}$ is applied on half of the $681$ selected in Aggregate Lineage tuples, but in extreme cases it may include all or none of them. For this reason, 
it is a good practice to run the randomised algorithm more than
once and compute a few distinct summaries in order to have better results. 
For instance, we may compute three summaries, use some benchmark sub-queries to decide a distance between 
summaries,  toss the summary which is the more distant and keep one of the others arbitrarily.
Note that it is easy to compute the benchmark queries in one pass through the original data in parallel
with computing the lineage. 

We now use the Aggregate Lineage $L_{Salaries.Sal}$ shown in Figure \ref{fig:ex0-sparse} to approximate the value of the sum answer to $Q_1$.
In one worst case query $Q_{1}$ will include:  the  $50$ tuples     with salaries $10^{9}$ from $L_{Salaries.Sal}$ tuples with the larger frequencies and   all $681$ selected  tuples with salaries $10^{7}$.
The  approximation  $Q'_{1}(L_{Salaries.Sal})$ in this case is
 $(4\cdotp 11+
 12\cdotp 10 + \ldots+681
 +6,809)S/b=
7,935\cdotp S/b=\,1.17\cdotp 10^{12}$. In the other extreme case  $Q_{1}$ includes tuples with the smaller frequencies and none of the selected in Aggregate Lineage tuples
 with salaries $10^{7}$, yielding
the   approximation
 $6,995\cdotp S/b=\,1.03\cdotp 10^{12}$.
 We see that $Q_{1}$ is well approximated. Of course the approximation bounds are not the same for every SUM query -  we presented the guarantees in Section \ref{sec:tec}.

 Another straw man approach would be to select as lineage the $8,852$ tuples with larger salary values. This method will select all $100$ tuples with salaries $10^{9}$, all $1,000$ tuples with salaries $10^{8}$ and the remaining $7,752$ tuples from tules with salaries $10^{7}$.  With this approach, query $Q_{1}$ will be on average approximated with the value $50\cdotp10^{9}+ 3,876\cdotp10^{7} \approx 8.8\times 10^{10}$ because it  loses all the information about all original $10^{6}$ tuples with salaries $10^{6}$ contributing to the sum. On another approach, a simple random sampling of 
$8,852$ tuples will  almost always select all of them from the $10^{6}$ many tuples with salaries $10^{6}$.  Query $Q_{1}$ will then  be approximated with the value $8,852\cdotp10^{6} \approx 8.8\times 10^{9}$. Note, on the other hand, that if all original tuples had the same salaries then our method would coincide with simple random sampling.  

\end{example}

\section{Discussion}

We have focused in our exposition only on a single aggregated attribute (e.g., $Sal$ in our example).
This is done for simplicity. Our ideas can be easily extended to include
more aggregated attributes as long as we are willing to keep a distinct aggregate lineage for each attribute.  E..g., suppose we also had a $Rev$ (for Revenue)  attribute for each employee.  In such a case we keep two lineage relations, one for $Sal$ and one
for $Rev$. The algorithm to compute them can be thought of as a parallel implementation of two copies of
the algorithm Comp-Lineage. We need only one pass through the original data. The only difference
is that now, a) we need the two total sums $S_{Sal}$ and
$S_{Rev} $ and b) for each tuple $t$, we have two probabilities $p^{Sal}_t$ and  $p^{Rev}_t$, the
first to be used for the lineage related to attribute $Sal$ and the second to be used for the 
lineage related to attribute $Rev$. 

Algorithm Comp-Lineage performs a weighted random sampling which selects with replacement  $b$ out of $n$ tuples of $R$ where the weight $w_{i}$ for the tuple with identifier $i$ is equal to the value of attribute $A$ of this tuple. Using $b$ copies (each copy selects a single element) of the weighted random sampling with reservoir algorithm presented in \cite{EfraimidisS06}, we can implement Comp-Lineage in one-pass over $R$, in $O(b n)$ time and $O(b)$ space. This implementation can also be applied to data streams and to settings where the values of $n$ and $S$ are not known in advance.

However the technique in \cite{EfraimidisS06}, 
does not seem to be efficiently parallelizable, at least not in a direct way.
Thus the 
problem of how to efficiently implement our technique in distributed computational environments such as
MapReduce remains open. 
Issues about how to implement sampling in MapReduce are discussed in \cite{Carey12}.
Another open problem is how to apply this technique to evolving data~\cite{Ganti02}. In data
streams, we assume that the sample is to be computed over the entire data. When data  continuously evolve with time, the sample may also change considerably with time. 
The nature of the sample may vary with both the moment at which it is computed 
and with the time horizon over which the user is interested in.
We have not investigated here
how to provide this flexibility.

\section{Comparison with Synopses for Data}

There has been extensive research on approximation techniques for aggregate queries on large databases and data streams. Previous work considers a variety of techniques including random sampling, histograms, multivalued histograms, wavelets and sketches (see e.g., \cite{CormodeGHJ12} and the references therein for details and applications of those methods).
Most of the previous work on histograms, wavelets, and sketches focuses on approximating aggregate queries on a given attribute $A$ for specific subsets of the data that are known when the synopsis is computed (e.g., the synopsis concerns the entire data stream or a particular subset of the database). Thus, such techniques typically lose the correlation between the approximated $A$ values and the original values of other attributes. For the more general case of multiple queries that can be posed over arbitrary sets of attributes and subsets of the data not specified when the synopsis is computed, those techniques typically lead to an exponential (in the number of other attributes involved) increase in the size of the synopsis (see e.g., \cite{DobraGGR02,DobraGGR04}).

In contrast, our approach is far more general and does not focus on approximating queries over specific attributes or subsets of the data. Our algorithm computes a small sample without assuming any knowledge on the set of queries and keeps the association between the sampled $A$ values and all other attributes. Then, we can use the Aggregate Lineage to approximate large-valued sum queries over arbitrary subsets of the data that can be expressed over any set of attributes. The Aggregate Lineage can approximately answer a number of queries exponential in its size. Of course, the queries should be oblivious to the actual Aggregate Lineage (technically, they should be computed by an oblivious adversary), but this technical condition applies to all previously known randomised synopses constructions (see e.g., \cite{CormodeGHJ12}).

\section{Conclusions}
We have presented a method that computes lineage for aggregate queries by applying weighted sampling.
The aggregate lineage can be used to compute arbitrary test aggregate queries on subsets of the original data. 
However the test queries can be computed with good approximation 
only if the result of each test query is large enough with respect to the total sum over all the data. 
The aggregate lineage we compute cannot be used to compute test queries if their result is comparatively small. We give performance guarantees. 

Parallel implementation on frameworks such as MapReduce is not studied here. The naive approach of parallelizing \cite{EfraimidisS06} would have either to transmit a large amount of data to the several compute nodes, or to have a makespan linear in $n$.

The idea of getting a single (possibly weighted) random sample from a large data set and using it for repeated estimations of a given quantity has appeared before in the context of machine learning and statistical estimation. 
Boosting techniques \cite{Schapire03} 
such as bootstrapping \cite{Efron79,Efron93} are used.
In \cite{KleinerTSJ12} BLB is used for the efficient  estimation of bootstrap-based quantities in a distributed computational environment.

\section*{Acknowledgements}
This work was supported by the project Handling Uncertainty in Data Intensive Applications, co-financed by the European Union (European Social Fund - ESF) and Greek national funds, through the Operational Program "Education and Lifelong Learning", under the program THALES.

\bibliographystyle{abbrv}
\bibliography{journal-ipl}

\end{document}